\newcommand{\bfi}{\bfseries\itshape}
\newcommand{\rem}[1]{}
\def\b{\begin{eqnarray}}
\def\e{\end{eqnarray}}
\newtheorem{theorem}{Theorem}
\newtheorem{lemma}[theorem]{Lemma}
\newtheorem{remark}[theorem]{Remark}
\begin{document}


\title{ Euler-Poincar\'e equations for
$G$-Strands}

\author{ Darryl D. Holm$^a$,  Rossen I. Ivanov$^b$}

\address{ $^a$ Department of
Mathematics, Imperial College, London SW7 2AZ, UK \\
$^b$Department of Mathematical Sciences, Dublin Institute of
Technology, Kevin Street, Dublin 8, Ireland}

\ead{d.holm@imperial.ac.uk, rossen.ivanov@dit.ie }

\rem{
PACS numbers:
}

\begin{abstract}

The $G$-strand equations for a map $\mathbb{R}\times \mathbb{R}$ into a Lie group $G$ 
are associated to a $G$-invariant Lagrangian. The Lie group manifold is also the
configuration space for the Lagrangian. The $G$-strand itself is the
map $g(t,s): \mathbb{R}\times \mathbb{R}\to G$, where
$t$ and $s$ are the independent variables of the $G$-strand
equations. The Euler-Poincar\'e reduction of the variational
principle leads to a formulation where the dependent variables of
the $G$-strand equations take values in the corresponding Lie
algebra $\mathfrak{g}$ and its co-algebra, $\mathfrak{g}^*$ with
respect to the pairing provided by the variational derivatives of the Lagrangian.

We review examples of different $G$-strand constructions,
including matrix Lie groups and diffeomorphism group.
In some cases the $G$-strand equations are completely integrable
1+1 Hamiltonian systems that admit soliton solutions.

\end{abstract}

\section{Introduction}

We give a brief account of the $G$-strand construction, which
gives rise to equations for a map $\mathbb{R}\times \mathbb{R}$ into a Lie group $G$ 
associated to a $G$-invariant Lagrangian. Our presentation reviews our
previous works \cite{ Ho-Iv-Pe,Ho-Iv1, Ho-Iv2, FDT, HoLu2013} and
is aimed to illustrate the $G$-strand construction with several simple but
instructive examples. The following examples are reviewed here:

(i) $SO(3)$-strand equations for the so-called continuous spin
chain. The equations reduce to the integrable chiral model in
their simplest (bi-invariant) case.

(ii) $SO(3)$ - anisotropic chiral model, which is also integrable,

(iii) ${\rm Diff}(\mathbb{R})$-strand equations. These equations
are in general non-integrable; however they admit solutions in
$2+1$ space-time with singular support (e.g., peakons).
Peakon-antipeakon collisions governed by the 
${\rm Diff}(\mathbb{R})$-strand equations can be solved \emph{analytically},
and potentially they can be applied in the theory of image
registration.

\section{Ingredients of Euler--Poincar\'e theory for Left $G$-Invariant Lagrangians}

 Let $G$ be a Lie group. A map $g(t,s): \mathbb{R}\times
\mathbb{R}\to G$ has two types of tangent vectors, $\dot{g} := g_t
\in T G$ and $g' :=g_s \in T G$. Assume that the Lagrangian
density function $ L(g,\dot{g},g') $ is left $G$-invariant. The
left $G$--invariance of $L$ permits us to define $l:
\mathfrak{g}\times \mathfrak{g} \rightarrow \mathbb{R}$ by
\[
L(g,\dot{g},g')=L(g^{-1}g,g^{-1}\dot{g},g^{-1}g')\equiv l(g^{-1}
\dot{g} , g^{-1} g' ).
\]
Conversely,  this relation defines for any reduced lagrangian
$l=l({\sf u},{\sf v}) : \mathfrak{g}\times \mathfrak{g}
\rightarrow \mathbb{R} $ a left $G$-invariant function $ L : T
G\times TG \rightarrow \mathbb{R} $ and a map $g(t,s):
\mathbb{R}\times \mathbb{R}\to G$ such that
\[
{\sf u} (t,s) := g^{ -1} g_t (t,s) =g^{ -1}\dot{g}(t,s)
\quad\hbox{and}\quad {\sf v} (t,s) := g^{ -1} g_s (t,s)= g^{ -1}
g' (t,s) .\]

\begin{lemma}
The left-invariant tangent vectors ${\sf u} (t,s)$ and ${\sf v}
(t,s)$ at the identity of $G$ satisfy
\begin{equation}
{\sf v}_t - {\sf u}_s = -\,{\rm ad}_{\sf u}{\sf v} \,.
\label{zero-curv1}
\end{equation}
\end{lemma}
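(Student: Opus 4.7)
The plan is to compute $\mathsf{v}_t$ and $\mathsf{u}_s$ separately from their definitions $\mathsf{u}=g^{-1}g_t$, $\mathsf{v}=g^{-1}g_s$, then take the difference and exploit equality of mixed partials $g_{ts}=g_{st}$. The key tool is the identity
\[
\partial_t(g^{-1}) = -\,g^{-1}\,g_t\,g^{-1} = -\,\mathsf{u}\,g^{-1},
\qquad
\partial_s(g^{-1}) = -\,\mathsf{v}\,g^{-1},
\]
which is immediate for matrix Lie groups from differentiating $g^{-1}g=\mathrm{id}$, and extends to an abstract Lie group by interpreting $\mathsf{u}\,dt+\mathsf{v}\,ds$ as the pullback of the left Maurer--Cartan form $\theta=g^{-1}dg$ under the map $g:\mathbb{R}^2\to G$.

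Carrying out the product rule gives
\[
\mathsf{v}_t = \partial_t(g^{-1})\,g_s + g^{-1}\,g_{st} = -\,\mathsf{u}\,\mathsf{v} + g^{-1}g_{st},
\qquad
\mathsf{u}_s = -\,\mathsf{v}\,\mathsf{u} + g^{-1}g_{ts}.
\]
Subtracting and using $g_{ts}=g_{st}$ cancels the second-derivative terms, yielding
\[
\mathsf{v}_t - \mathsf{u}_s = -\,\mathsf{u}\mathsf{v} + \mathsf{v}\mathsf{u} = -\,[\mathsf{u},\mathsf{v}] = -\,\mathrm{ad}_{\mathsf{u}}\mathsf{v},
\]
which is exactly the claimed identity \eqref{zero-curv1}.

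For a general (non-matrix) Lie group $G$ I would give the coordinate-free version of the same calculation: since $\mathsf{u}\,dt+\mathsf{v}\,ds = g^{*}\theta$ and $\theta$ satisfies the Maurer--Cartan equation $d\theta+\tfrac12[\theta,\theta]=0$, pulling this back to $\mathbb{R}^2$ gives
\[
\bigl(\mathsf{v}_t-\mathsf{u}_s\bigr)\,dt\wedge ds + [\mathsf{u},\mathsf{v}]\,dt\wedge ds = 0,
\]
from which the result follows. The only subtle point, and the one place where care is needed, is the sign bookkeeping in the exterior derivative $d(\mathsf{u}\,dt+\mathsf{v}\,ds)=(\mathsf{v}_t-\mathsf{u}_s)\,dt\wedge ds$ and in the bracket of $\mathfrak{g}$-valued 1-forms $\tfrac12[g^{*}\theta,g^{*}\theta]=[\mathsf{u},\mathsf{v}]\,dt\wedge ds$; once these conventions are fixed, the identity is essentially a restatement of Maurer--Cartan flatness of the pulled-back form. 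No genuine obstacle arises; the lemma is the standard compatibility condition (``zero-curvature relation'') between two left-logarithmic derivatives of the same $G$-valued function of two variables.
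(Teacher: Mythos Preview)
Your proof is correct and follows exactly the approach the paper indicates: the paper's own proof simply states that the result ``is standard and follows from equality of cross derivatives $g_{ts}=g_{st}$,'' and your argument is precisely the detailed version of that computation. Your additional Maurer--Cartan formulation for abstract Lie groups is a nice supplement but not required by the paper.
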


\begin{proof}
The proof is standard and follows from equality of cross
derivatives $g_{ts}=g_{st}$.

Equation (\ref{zero-curv1}) is usually called a {\bfi
zero-curvature relation}.
\end{proof}

\begin{theorem} [ Euler-Poincar\'e theorem for left-invariant Lagrangians]\label{lall}$\,$

With the preceding notation, the following two statements are
equivalent:
\begin{enumerate}
\item [{\bf i} ] Variational principle on $T G\times TG$ $\,\,$ $
\delta \int _{t_1} ^{t_2} L(g(t,s), \dot{g} (t,s), g'(t,s) )
\,ds\,dt = 0 $ holds, for variations $\delta g(t,s)$ of $ g (t,s)
$ vanishing at the endpoints in $t$ and $s$. The function $g(t,s)$
satisfies Euler--Lagrange equations for $L$ on $G$, given by
\begin{equation*} \label{EL-eqns}
\frac{\partial L}{\partial g} - \frac{\partial}{\partial
t}\frac{\partial L}{\partial g_t} - \frac{\partial}{\partial
s}\frac{\partial L}{\partial g_s} = 0.
\end{equation*}

\item [{\bf ii} ]  The constrained variational principle%
\footnote{As with the basic Euler--Poincar\'e equations, this is
not strictly a variational principle in the same sense as the
standard Hamilton's principle. It is more like the Lagrange
d'Alembert principle, because we impose the stated constraints on
the variations allowed.}
\begin{equation*} \label{variationalprinciple}
\delta \int _{t_1} ^{t_2}  l({\sf u}(t,s), {\sf v}(t,s)) \,ds\,dt
= 0
\end{equation*}
holds on $\mathfrak{g}\times\mathfrak{g}$, using variations of $
{\sf u} := g^{ -1} g_t (t,s)$ and ${\sf v}:= g^{ -1} g_s(t,s) $ of
the forms
\begin{equation*} \label{epvariations}
\delta {\sf u} = \dot{{\sf w} } + {\rm ad}_{\sf u}{\sf w}
\quad\hbox{and}\quad \delta {\sf v} = {\sf w}\,' + {\rm ad}_{\sf
v} {\sf w} \,,
\end{equation*}
where ${\sf w}(t,s) :=g^{ -1}\delta g \in \mathfrak{g}$ vanishes
at the endpoints. The {\bfi Euler--Poincar\'{e}} equations hold on
$\mathfrak{g}^*\times\mathfrak{g}^*$ ({\bfi$G$-strand equations})

\begin{align*}
\frac{d}{dt} \frac{\delta l}{\delta {\sf u}} -
 \operatorname{ad}_{{\sf u}}^{\ast} \frac{ \delta l }{ \delta {\sf u}}
+ \frac{d}{ds} \frac{\delta l}{\delta {\sf v}} -
 \operatorname{ad}_{{\sf v}}^{\ast} \frac{ \delta l }{ \delta {\sf v}}
 =
 0  \quad\hbox{ \& }\quad
\partial_{s}{\sf u} - \partial_t{\sf v} = [\,{\sf u},\,{\sf v}\,] = {\rm ad}_{\sf u}{\sf v}
\label{GSeqns}
\end{align*}
where $({\rm ad}^*:
\mathfrak{g}\times\mathfrak{g}^*\to\mathfrak{g}^*)$ is defined via
$({\rm ad}:\mathfrak{g}\times\mathfrak{g}\to\mathfrak{g})$ in the
dual pairing $\langle \,\cdot\,,\,\cdot\,\rangle:
\mathfrak{g}^*\times\mathfrak{g}\to\mathbb{R}$ by,
\bigskip
\begin{align*}
\left\langle {\rm ad}^*_{\sf u}\frac{\delta \ell}{\delta{\sf u}}
\,,\, {\sf v} \right\rangle_\mathfrak{g} = \left\langle
\frac{\delta \ell}{\delta{\sf u}} \,,\, {\rm ad}_{\sf u}{\sf v}
\right\rangle_\mathfrak{g}.
\end{align*}

\end{enumerate}
\end{theorem}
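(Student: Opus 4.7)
The plan is to establish the equivalence of \textbf{(i)} and \textbf{(ii)} by a two-stage reduction. First, I would verify that statement \textbf{(i)} is just Hamilton's principle for $L$ on $T G \times T G$: expanding $\delta \int L(g, g_t, g_s)\,ds\,dt$ and integrating by parts in both $t$ and $s$, while using that $\delta g$ vanishes at the endpoints in both variables, produces the stated Euler--Lagrange equations in the usual way, the only novelty being that two independent integrations by parts are required rather than one.

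The central step is deriving the constrained form of the variations on $\mathfrak{g}$. I would set $\mathsf{w} := g^{-1}\delta g \in \mathfrak{g}$ and compute $\delta \mathsf{u} = \delta(g^{-1} g_t)$ directly, using $\delta(g^{-1}) = - g^{-1}(\delta g) g^{-1}$ together with the commutation of $\delta$ with $\partial_t$ (and, for $\delta\mathsf{v}$, with $\partial_s$). A short rearrangement of the resulting terms produces $\delta \mathsf{u} = \mathsf{w}_t + [\mathsf{u},\mathsf{w}] = \dot{\mathsf{w}} + \ad_{\mathsf{u}}\mathsf{w}$ and, analogously, $\delta \mathsf{v} = \mathsf{w}' + \ad_{\mathsf{v}}\mathsf{w}$. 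The vanishing-endpoint condition on $\delta g$ transfers directly to $\mathsf{w}$, which will be crucial in the final integration by parts.

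Next I would transport the variational principle from $T G \times T G$ to $\mathfrak{g}\times\mathfrak{g}$ using left-invariance: since $L(g,g_t,g_s) = l(\mathsf{u},\mathsf{v})$, one has $\delta \int l(\mathsf{u},\mathsf{v})\,ds\,dt = \int \bigl( \langle \delta l/\delta \mathsf{u},\, \delta \mathsf{u}\rangle + \langle \delta l/\delta \mathsf{v},\, \delta \mathsf{v}\rangle \bigr)\,ds\,dt$. Substituting the constrained forms of $\delta\mathsf{u}$ and $\delta\mathsf{v}$, integrating by parts in $t$ and $s$ (boundary terms vanish because $\mathsf{w}$ is zero at the endpoints), and applying the defining pairing of $\ad^*$, I would collect the integrand into $\bigl\langle -\tfrac{d}{dt}\tfrac{\delta l}{\delta \mathsf{u}} + \ad^*_{\mathsf{u}}\tfrac{\delta l}{\delta \mathsf{u}} - \tfrac{d}{ds}\tfrac{\delta l}{\delta \mathsf{v}} + \ad^*_{\mathsf{v}}\tfrac{\delta l}{\delta \mathsf{v}},\, \mathsf{w}\bigr\rangle$. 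Because $\mathsf{w}$ is otherwise arbitrary in the interior, the fundamental lemma of the calculus of variations forces the first $G$-strand equation to hold; the accompanying zero-curvature relation is already furnished by the preceding Lemma, so the two directions of the equivalence are both obtained by the same chain of identities read forwards or backwards.

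I expect the main obstacle to be the derivation of the constrained variations $\delta \mathsf{u}$ and $\delta \mathsf{v}$: the calculation is short but is the place where non-commutativity of $G$ enters essentially, and it is precisely this ingredient that separates Euler--Poincar\'e reduction from the standard Euler--Lagrange picture. Everything else reduces to careful integration by parts in two independent variables and to the duality that defines $\ad^*$.
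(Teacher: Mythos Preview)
Your proposal is correct and follows the standard derivation of the Euler--Poincar\'e equations adapted to two independent variables: derive the constrained variations $\delta\mathsf{u}=\dot{\mathsf{w}}+\ad_{\mathsf{u}}\mathsf{w}$ and $\delta\mathsf{v}=\mathsf{w}'+\ad_{\mathsf{v}}\mathsf{w}$ from $\mathsf{w}=g^{-1}\delta g$, substitute into $\delta\int l(\mathsf{u},\mathsf{v})\,ds\,dt$, integrate by parts in $t$ and $s$, and invoke the duality defining $\ad^*$ together with the arbitrariness of $\mathsf{w}$.

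It should be noted, however, that the paper does not actually supply a proof of this theorem. The statement is followed only by a historical remark on Poincar\'e's 1901 note and a pointer to the references \cite{Ho2011GM2,Marle} for a modern treatment; the argument is taken as known background for the review. Your outline is precisely the proof one finds in those sources (and which underlies the paper's Lemma on the zero-curvature relation, whose derivation of $\mathsf{v}_t-\mathsf{u}_s=-\ad_{\mathsf{u}}\mathsf{v}$ from $g_{ts}=g_{st}$ is the same cross-derivative computation you use for $\delta\mathsf{u}$ and $\delta\mathsf{v}$). So there is nothing to correct, but also no ``paper's own proof'' to compare against beyond confirming that your approach is the intended standard one.
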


In 1901 Poincar\'e in his famous work proves that, when a Lie
algebra acts locally transitively on the configuration space of a
Lagrangian mechanical system, the well known Euler-Lagrange
equations are equivalent to a new system of differential equations
defined on the product of the configuration space with the Lie
algebra. These equations are called now in his honor
Euler-Poincar\'e equations. In modern language the contents of the
Poincar\'e's article \cite{Poincare} is presented for example in
\cite{Ho2011GM2,Marle}. English translation of the article
\cite{Poincare} can be found as Appendix D in \cite{Ho2011GM2}.

 \section{$G$-strand equations on matrix Lie algebras}

Denoting ${\sf m}:=\delta \ell/\delta{\sf u}$ and ${\sf n}:=\delta
\ell/\delta{\sf v}$ in $\mathfrak{g}^*$, the $G$-strand equations
become
$$
{\sf m}_t + {\sf n}_{s} - {\rm ad}^*_{\sf u}{\sf m}
 - {\rm ad}^*_{\sf v}{\sf n}
=0 \quad\hbox{and}\quad
\partial_t{\sf v} -\partial_{s}{\sf u} + {\rm ad}_{\sf u}{\sf v} =
0.
$$
For $G$ a semisimple \emph{matrix Lie group} and $\mathfrak{g}$ its
\emph{matrix Lie algebra} these equations become \begin{equation}
\label{MatrAlgEq} \begin{split} {\sf m}^{T}_t + {\sf n}^{T}_{s} +
{\rm ad}_{\sf u}{\sf m}^{T}
 + {\rm ad}_{\sf v}{\sf n}^{T}
=&0, \\
\partial_t{\sf v} -\partial_{s}{\sf u}  + {\rm ad}_{\sf u}{\sf v}=& 0
\end{split}
\end{equation} 
where the ad-invariant pairing for semisimple matrix Lie algebras is given by 

$$\Big\langle{{{\sf m}}}\,,\,{{{\sf n}}}\Big\rangle=\frac{1}{2}\tr({\sf m}^T{\sf n}), $$the transpose
gives the map between the algebra and its dual $(\,\cdot\,)^{T}:
\mathfrak{g}\to\mathfrak{g}^*$. For semisimple matrix Lie groups, the adjoint operator is the matrix 
commutator. Examples are studied in \cite{Ho-Iv-Pe, Ho-Iv2, FDT}.

\section{Lie-Poisson Hamiltonian formulation}

Legendre transformation of the Lagrangian $\ell({{{\sf u}},{{\sf
v}}}):\, \mathfrak{g}\times \mathfrak{g}\to\mathbb{R}$ yields  the
Hamiltonian $h({{{\sf m}},{{\sf v}}}):\, \mathfrak{g}^*\times
\mathfrak{g}\to\mathbb{R}$
\begin{equation}
h({{{\sf m}}},{{{\sf v}}}) = \Big\langle{{{\sf m}}}\,,\,{{{\sf
u}}}\Big\rangle - \ell({{{\sf u}},{{\sf v}}}) \,.
\label{leglagham} \vspace{-3mm}
\end{equation}

Its partial derivatives imply
\begin{eqnarray*}
\frac{\delta l}{\delta {{\sf u}}} = {{\sf m}} \,,\quad
\frac{\delta h}{\delta {{\sf m}}} = {{\sf u}} \quad\hbox{and}\quad
\frac{\delta h}{\delta {{\sf v}}} = -\,\frac{\delta \ell}{\delta
{{\sf v}}} = {\sf v} .
\end{eqnarray*}

These derivatives allow one to rewrite the Euler-Poincar\'e
equation solely in terms of momentum ${{\sf m}}$ as

\begin{equation} \label{hameqns-so3}
\begin{split}
{\partial_t} {{\sf m}} &= {\rm ad}^*_{\delta h/\delta {{\sf m}}}\,
{{\sf m}} + \partial_{s} \frac{\delta h}{\delta {{\sf v}}} - {\rm
ad}^*_{{\sf v}}\,\frac{\delta h}{\delta {{\sf v}}}
\, ,\\
\partial_t {{\sf v}}
&= \partial_{s}\frac{\delta h}{\delta {{\sf m}}} -  {\rm
ad}_{\delta h/\delta {{\sf m}}}\,{{\sf v}} \,.
\end{split}
\end{equation}
Assembling these equations into Lie-Poisson Hamiltonian form
gives,
%
\begin{equation} \label{LP-Ham-struct-symbol1}
\frac{\partial}{\partial t}
    \begin{bmatrix}
    {{\sf m}}
    \\
    {{\sf v}}
    \end{bmatrix}
=
\begin{bmatrix}
  {\rm ad}^\ast(\,\cdot\,) {{\sf m}}
   &\hspace{5mm}
  \partial_s - {\rm ad}^*_{{\sf v}}
   \\
   \partial_s + {\rm ad}_{{\sf v}}
   &\hspace{5mm} 0
    \end{bmatrix}
    \begin{bmatrix}
   \delta h/\delta{{\sf m}} \\
   \delta h/\delta{{\sf v}}
    \end{bmatrix}
\end{equation}


The Hamiltonian matrix in equation (\ref{LP-Ham-struct-symbol1})
also appears in the Lie-Poisson brackets for Yang-Mills plasmas,
for spin glasses and for perfect complex fluids, such as liquid
crystals.

\section{Example: The Euler-Poincar\'e PDEs for the $SO(3)$-strand  and the chiral model. The $2$-time spatial and body angular velocities on $\mathfrak{so}(3)$}

Let us make the following explicit identification:
\begin{equation}
{\sf u}=\left(%
\begin{array}{ccc}
  0 & -u_3 & u_2 \\
  u_3 & 0 & -u_1 \\
  -u_2 & u_1 & 0 \\
\end{array}%
\right)\in \mathfrak{g}\quad \leftrightarrow \quad \boldsymbol{{{\sf u}}}\equiv\left(%
\begin{array}{c}
  u_1 \\
  u_2 \\
  u_3 \\
\end{array}%
\right)\in \mathbb{R}^3 \label{eqomegaL} \end{equation}

and similarly for $\boldsymbol{{{\sf v}}}$. In terms of the
corresponding group element $g(s,t)$, describing rotation, $ {\sf
u}(t,{s})=g^{-1}\partial_t g(t,{s}) $ and $ {\sf
v}(t,{s})=g^{-1}\partial_{s} g(t,{s}) $ resemble $2$ body angular
velocities. For $G=SO(3)$ and Lagrangian $ \ell (\boldsymbol{{{\sf
u}},\,{{\sf v}}} ): \mathbb{R}^3\times\mathbb{R}^3\to\mathbb{R}, $
in $1+1$ space-time the Euler-Poincar\'e equation becomes

\begin{equation}
\frac{\partial}{\partial t} \frac{\delta\ell}{\delta \boldsymbol{{{\sf u}}}}
+ \boldsymbol{{{\sf u}}}\times\frac{\delta\ell}{\delta \boldsymbol{{{\sf u}}} }
=
- \left(\frac{\partial}{\partial {s}}  \frac{\delta\ell}{\delta \boldsymbol{{{\sf v}}} }
+
{\boldsymbol{{{\sf v}}}}\times
\frac{\delta\ell}{\delta \boldsymbol{{{\sf v}}} }\right)
\,,
\label{2timeEP-SO3}
\end{equation}

and its auxiliary equation becomes
\begin{equation}
\frac{\partial}{\partial t} \boldsymbol{{{\sf v}}}
=
\frac{\partial}{\partial {s}}{\boldsymbol{{{\sf u}}}}
+
{\boldsymbol{{{\sf v}}}}\times{\boldsymbol{{{\sf u}}}}
\,.
\label{aux-eqn-2timeX}
\end{equation}


The Hamiltonian form of these equations on $\mathfrak{so}(3)^*$ are obtained from the Legendre transform relations
\begin{eqnarray*}
\frac{\delta \ell}{\delta \boldsymbol{{{\sf u}}} }
= \boldsymbol{{{\sf m}}}
\,,\quad
\frac{\delta h}{\delta \boldsymbol{{{\sf m}}}} = \boldsymbol{{{\sf u}}}
\quad\hbox{and}\quad
\frac{\delta h}{\delta \boldsymbol{{{\sf v}}} }
= -\,\frac{\delta \ell}{\delta \boldsymbol{{{\sf v}}} }
\,.
\end{eqnarray*}

Hence, the  Euler-Poincar\'e equation implies the Lie-Poisson
Hamiltonian structure in vector form

\begin{equation*} \label{LP-Ham-struct-so3}
\partial_t
    \begin{bmatrix}
    \boldsymbol{{{\sf m}}}
    \\
    \boldsymbol{{{\sf v}}}
    \end{bmatrix}
=
\begin{bmatrix}
    \boldsymbol{{{\sf m}}}\times
   &     \partial_s
   + \boldsymbol{{{\sf v}}}\times
      \\
    \partial_s
   + \boldsymbol{{{\sf v}}}\times
& 0
    \end{bmatrix}
    \begin{bmatrix}
   \delta h/\delta\boldsymbol{{{\sf m}}} \\
   \delta h/\delta\boldsymbol{{{\sf v}}}
    \end{bmatrix}.
\end{equation*}

This Poisson structure appears in various other theories, such as
complex fluids and filament dynamics.

When \begin{equation} \label{spinchainLagr}\ell=\frac12 \int
(\boldsymbol{{{\sf u}}}\cdot A \boldsymbol{{{\sf u}}}
+\boldsymbol{{{\sf v}}}\cdot B \boldsymbol{{{\sf
v}}})\,ds\end{equation} this is the $SO(3)$ \emph{spin-chain
model}, which is in general non-integrable- eq.
(\ref{2timeEP-SO3}) and (\ref{aux-eqn-2timeX}) give:
\begin{equation} \frac{\partial}{\partial t}  A \boldsymbol{{{\sf
u}}} + \boldsymbol{{{\sf u}}}\times  A \boldsymbol{{{\sf u}}}+
\frac{\partial}{\partial {s}} B \boldsymbol{{{\sf v}}} +
{\boldsymbol{{{\sf v}}}}\times B \boldsymbol{{{\sf v}}} =0\,,
\label{SO3 spin chain1}
\end{equation}
\begin{equation}
\frac{\partial}{\partial t} \boldsymbol{{{\sf v}}} =
\frac{\partial}{\partial {s}}{\boldsymbol{{{\sf u}}}} +
{\boldsymbol{{{\sf v}}}}\times{\boldsymbol{{{\sf u}}}}
\,.\label{SO3 spin chain2}
\end{equation}

 When $A=-B=1$, this is the $SO(3)$ \emph{chiral model},
which is an integrable Hamiltonian system.

\begin{equation}
\boldsymbol{{{\sf u}}}_t - \boldsymbol{{{\sf v}}}_s =0\,,
\label{SO3 chiral1}
\end{equation}
\begin{equation}
\boldsymbol{{{\sf v}}}_t - {\boldsymbol{{{\sf u}}}}_s +
{\boldsymbol{{{\sf u}}}}\times{\boldsymbol{{{\sf v}}}}=0
\,.\label{SO3 chiral2}
\end{equation}

\section{Integrability}

Some of the $G$-strands models are well known integrable models.
They have a {\it zero-curvature representation} for two operators
$L$ and $M$ of the form
\begin{align}
L_t - M_s + [L,M] = 0, \label{ZC-comrel}
\end{align}\bigskip
which is the compatibility condition for a pair of linear
equations
\[
\psi_s = L\psi, \quad\hbox{and}\quad \psi_t = M\psi.
\]

For the SO(3) chiral model for example these operators are
\begin{equation}
\begin{split} L&=\frac{1}{4}\left[(1+\lambda)({\sf u}-{\sf v})-
\left(1+\frac{1}{\lambda}\right)({\sf u}+{\sf v}) \right],\\
M&=-\frac{1}{4}\left[(1+\lambda)({\sf u}-{\sf v})+
\left(1+\frac{1}{\lambda}\right)({\sf u}+{\sf v}) \right].
\end{split}
\end{equation}

Another integrable matrix example: $SO(3)$ anisotropic Chiral
model \cite{Ch1981}
\begin{equation}\label{XY-eqn-1a}
\begin{split}
\partial_t{\boldsymbol{\mathsf{v}}}(t,{s}) - \partial_{s}{\boldsymbol{\mathsf{u}}}(t,{s})
+{\boldsymbol{\mathsf{u}}}\times
P{\boldsymbol{\mathsf{v}}}-{\boldsymbol{\mathsf{v}}}\times
P{\boldsymbol{\mathsf{u}}}=0 \,,
\\
\partial_{s}{\boldsymbol{\mathsf{v}}}(t,{s}) - \partial_t{\boldsymbol{\mathsf{u}}}(t,{s})
- {\boldsymbol{\mathsf{v}}}\times P
{\boldsymbol{\mathsf{v}}}+{\boldsymbol{\mathsf{u}}}\times
P{\boldsymbol{\mathsf{u}}}=0 \,.
\end{split}
\end{equation}

$P=\text{diag}(P_1,P_2,P_3)$ is a constant diagonal matrix.  Under
the linear change of variables  \begin{equation}
\boldsymbol{\mathsf{X}} = \boldsymbol{\mathsf{u}} -
\boldsymbol{\mathsf{v}} \quad\hbox{and}\quad
\boldsymbol{\mathsf{Y}} = -\,\boldsymbol{\mathsf{u}} -
\boldsymbol{\mathsf{v}} \label{changeXY-uv}
\end{equation}
equations (\ref{XY-eqn-1a}) acquire the form of the following
$SO(3)$ anisotropic chiral model,

\begin{equation}\label{uv-eqn-1}
\begin{split}
\partial_t{\boldsymbol{\mathsf{X}}}(t,{s}) +\partial_{s}{\boldsymbol{\mathsf{X}}}(t,{s})
+{\boldsymbol{\mathsf{X}}}\times P{\boldsymbol{\mathsf{Y}}}&=0
\,,\\
\partial_t{\boldsymbol{\mathsf{Y}}}(t,{s}) - \partial_{s}{\boldsymbol{\mathsf{Y}}}(t,{s})
+{\boldsymbol{\mathsf{Y}}}\times P {\boldsymbol{\mathsf{X}}}&=0
\,.
\end{split}
\end{equation}

The system (\ref{uv-eqn-1}) represents two \emph{cross-coupled}
equations for ${\boldsymbol{\mathsf{X}}}$ and
${\boldsymbol{\mathsf{Y}}}$. These equations preserve the
magnitudes $|{\boldsymbol{\mathsf{X}}}|^2$ and
$|{\boldsymbol{\mathsf{Y}}}|^2$, so they allow the further
assumption that the vector fields
$(\boldsymbol{\mathsf{X}},\boldsymbol{\mathsf{Y}})$ take values on
the product of unit spheres $\mathbb{S}^2 \times \mathbb{S}^2
\subset  \mathbb{R}^3 \times\mathbb{R}^3$. The  anisotropic chiral
model is an integrable system and its Lax pair in terms of
$(\boldsymbol{\mathsf{u}},\boldsymbol{\mathsf{v}})$ utilizes the
following isomorphism between $\mathfrak{so}(3) \oplus
\mathfrak{so}(3)$ and $\mathfrak{so}(4)$:
\begin{equation}
A({\boldsymbol{\mathsf{u}}},{\boldsymbol{\mathsf{v}}})=\left(
\begin{matrix}
0 & u_3&-u_2 & v_1\\
-u_3 & 0&u_1 & v_2\\
u_2 & -u_1&0 & v_3 \\
-v_1 & -v_2&-v_3 & 0
\end{matrix}
\right) . \label{Mso4-def}
\end{equation}

The system (\ref{XY-eqn-1a}) can be recovered as a compatibility
condition of the operators
\begin{eqnarray}
L&=&\partial_{s}-A({\boldsymbol{\mathsf{v}}},{\boldsymbol{\mathsf{u}}})(\lambda\,{\rm Id}+J),\\
M&=&\partial_t-A({\boldsymbol{\mathsf{u}}},{\boldsymbol{\mathsf{v}}})(\lambda\,{\rm
Id}+J), \label{L-Mpair}
\end{eqnarray}
where the diagonal matrix $J$ is defined by
\begin{equation}
J = -\frac{1}{2}\text{diag}(P_1,P_2,P_3,P_1+P_2+P_3).
\label{J-def}
\end{equation}

This Lax pair is due to Bordag and Yanovski \cite{BoYa1995}. The
$O(3)$ anisotropic chiral model can be derived as an
Euler-Poincar\'e equation from a Lagrangian with quadratic kinetic
and potential energy. The details are presented in
\cite{Ho-Iv-Pe}.

\begin{remark}
If ${\sf P}={\rm Id}$, equations (\ref{XY-eqn-1a}) recover the
$SO(3)$ chiral model.
\end{remark}

\section{ The ${\rm Diff}(\mathbb{R})$-strand system}

The constructions described briefly in the previous sections can
be easily generalized in cases where the Lie group is the group of
the Diffeomorphisms. Consider Hamiltonian which is a
right-invariant bilinear form given by the $H^1$ Sobolev inner
product \begin{equation} H(u,v)\equiv \frac{1}{2}\int
_{\mathcal{M}}(uv+u_xv_x) dx.
\end{equation}
The manifold $\mathcal{M}$ is $\mathbb{S}^1$ or in the case when
the class of smooth functions vanishing rapidly at $\pm \infty$ is
considered, we will allow $\mathcal{M} \equiv \mathbb{R}$.

Let us introduce the notation $u(g(x))\equiv u\circ g$. If
$g(x)\in G$, where $G\equiv \text{Diff}(\mathcal{M})$, then
$$H(u,v)=H(u\circ g, v\circ g)$$ is a right-invariant $H^1$ metric.

Let us further consider an one-parametric family of
diffeomprphisms, $g(x,t)$ by defining the $t$ - evolution as
\begin{equation}
\dot{g}=u(g(x,t),t), \qquad g(x,0)=x, \qquad \text{i.e.} \qquad
\dot{g}=u\circ g \in T_g G;
\end{equation} $u=\dot{g}\circ g^{-1}\in
\mathfrak{g}$, where $\mathfrak{g}$, the corresponding Lie-algebra
is the algebra of vector fields, $\text{Vect}(\mathcal{M})$. Now
we recall the following result:

\begin{theorem}(A. Kirillov, 1980, \cite{K81, K93})
The dual space of $\mathfrak{g}$ is a space of distributions but
the subspace of local functionals, called the regular dual
$\mathfrak{g}^*$  is naturally identified with the space of
quadratic differentials $m(x)dx^2$ on $\mathcal{M}$. The pairing
is given for any vector field $u\partial_x\in
\text{Vect}(\mathcal{M})$ by

$$\langle mdx^2, u\partial_x\rangle=\int_{\mathcal{M}}m(x)u(x)dx$$

\end{theorem}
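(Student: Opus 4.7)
The plan is to identify two things: (i) which functionals on $\mathfrak{g}=\mathrm{Vect}(\mathcal{M})$ qualify as \emph{local}, and (ii) the geometric transformation law such a functional must obey in order to be coordinate-independent, then recognise that law as the defining property of a quadratic differential.

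First, I would fix what ``local functional'' means: a continuous linear functional $\Phi:\mathfrak{g}\to\mathbb{R}$ on the space of smooth vector fields $u\partial_x$ that can be written as an integral $\Phi(u\partial_x)=\int_{\mathcal{M}} F(x,u,u_x,u_{xx},\ldots)\,dx$ depending on $u$ and finitely many of its derivatives. Linearity in $u$ immediately collapses this to $\Phi(u\partial_x)=\int_{\mathcal{M}} m(x)\,u(x)\,dx$ for some smooth function $m(x)$ (after integration by parts to remove derivatives of $u$, any boundary terms vanishing either by the periodicity on $\mathbb{S}^1$ or by rapid decay on $\mathbb{R}$). This produces a natural candidate pairing and already gives the formula announced in the statement.

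The substantive step is to determine what geometric object $m$ must be. I would do this by demanding invariance of the pairing under an arbitrary diffeomorphism $y=\phi(x)$, equivalently under the action of $\mathrm{Diff}(\mathcal{M})$ on $\mathfrak{g}$ and $\mathfrak{g}^*$ (the coadjoint picture). Under $y=\phi(x)$, a vector field transforms by $\tilde{u}(\phi(x))=u(x)\phi'(x)$, while $dx=dy/\phi'(x)$. Substituting into $\int m(x)u(x)\,dx$ and comparing with $\int\tilde{m}(y)\tilde{u}(y)\,dy$, I get the transformation law $m(x)=\tilde{m}(\phi(x))(\phi'(x))^2$. This is precisely the rule for the coefficient of a quadratic differential $m(x)\,dx^2$, because $dx^2=dy^2/(\phi'(x))^2$. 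Hence $m$ is not a function but a section of the symmetric square of the cotangent bundle, i.e.\ a quadratic differential $m(x)\,dx^2$.

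Conversely, I would check that any quadratic differential defines a continuous linear functional on $\mathfrak{g}$ by the same integral, with the pairing being manifestly diffeomorphism-invariant by construction, and that distinct quadratic differentials give distinct functionals (nondegeneracy), using bump-function test vector fields. The main subtlety — and the step I expect to take the most care — is the first one: pinning down the class of ``regular'' or ``local'' functionals so as to exclude wild distributions (e.g.\ functionals involving derivatives of Dirac masses) while including enough to recover every quadratic differential; once this technical framework is fixed, the transformation-law computation and the identification with quadratic differentials are essentially forced.
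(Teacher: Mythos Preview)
The paper does not actually prove this theorem: it is stated as a result of Kirillov with citations, and no argument is given. What immediately follows the theorem in the paper is not a proof but a separate computation of the coadjoint action, $\mathrm{Ad}_g^*:\ m\,dx^2\mapsto m(g)g_x^2\,dx^2$ and $\mathrm{ad}_u^* m = 2u_x m + u m_x$, derived from the pairing formula by a direct integration-by-parts calculation. So there is no ``paper's own proof'' to compare against.

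That said, your outline is a reasonable sketch of why the identification holds, and it dovetails with what the paper does record: the transformation law $m(x)=\tilde m(\phi(x))(\phi'(x))^2$ that you derive from coordinate-invariance of the pairing is precisely the $\mathrm{Ad}_g^*$ formula the paper writes down after the theorem. Your diagnosis of where the genuine work lies --- fixing what ``regular'' or ``local'' functional means so as to exclude distributional functionals while capturing exactly the quadratic differentials --- is accurate; that is the content one would have to extract from Kirillov's cited papers, and it is not something the present paper supplies. One minor caution: your integration-by-parts reduction of a general linear local functional $\int\sum_k a_k(x)\,\partial_x^k u\,dx$ to $\int m(x)u(x)\,dx$ tacitly assumes the coefficients $a_k$ are smooth enough for the required differentiations; this is harmless in the smooth category but worth stating.
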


The coadjoint action coincides with the action of a diffeomorphism
on the quadratic differential:

$$\text{Ad}_g^*:\quad mdx^2\mapsto m(g)g_x^2dx^2$$ and $$\text{ad}_{u}^*=2u_x+u\partial_x$$
Indeed, a simple computation shows that
\begin{eqnarray}
\langle\text{ad}_{u\partial_x}^*
mdx^2,v\partial_x\rangle&=&\langle
mdx^2,[u\partial_x,v\partial_x]\rangle=\int_{\mathcal{M}}m(u_xv-v_xu)dx=
\nonumber
\\ \int_{\mathcal{M}}v(2mu_x+um_x)dx&=&\langle(2mu_x+um_x)dx^2,v\partial_x\rangle, \nonumber \end{eqnarray}
i.e. $\text{ad}_{u}^*m=2u_xm+um_x$.

The ${\rm Diff}(\mathbb{R})$-strand system arises when we choose
$G={\rm Diff}(\mathbb{R})$. For a two-parametric group we have two
tangent vectors
\begin{equation*}
\partial_t{g}= u \circ g
\quad\hbox{and}\quad
\partial_s{g}= v \circ g
\,,
\end{equation*}
where the symbol $\circ$ denotes composition of functions.

In this right-invariant case, the $G$-strand PDE system with
reduced Lagrangian $\ell(u,v)$ takes the form,

\begin{align}
\begin{split}
\frac{\partial}{\partial t} \frac{\delta\ell}{\delta u} +
\frac{\partial}{\partial s}  \frac{\delta\ell}{\delta v } &= -\,
{\rm ad}^*_{u}\frac{\delta\ell}{\delta u} - {\rm ad}^*_{v
}\frac{\delta\ell}{\delta v } \,,
\\
\frac{\partial v}{\partial t}   - \frac{\partial u}{\partial s} &=
{\rm ad}_u v \,.
\end{split}
\label{Gstrand-eqn1R}
\end{align}\medskip

Of course, the distinction between the maps $({u},{v}):
\mathbb{R}\times \mathbb{R}\to \mathfrak{g}\times\mathfrak{g}$ and
their pointwise values $({u}(t,s),{v}(t,s))\in
\mathfrak{g}\times\mathfrak{g}$ is clear. Likewise, for the
variational derivatives ${\delta\ell}/{\delta {{u}}}$ and
${\delta\ell}/{\delta v}$.

\section{The ${\rm Diff}(\mathbb{R})$-strand Hamiltonian structure}

Upon setting  $m={\delta\ell}/{\delta u }$ and
$n={\delta\ell}/{\delta v }$, the right-invariant ${\rm
Diff}(\mathbb{R})$-strand equations in (\ref{Gstrand-eqn1R}) for
maps $\mathbb{R}\times\mathbb{R}\to G={\rm Diff}(\mathbb{R})$ in
one spatial dimension may be expressed as a system of two 1+2 PDEs
in $(t,s,x)$,
\begin{align}
\begin{split}
m_t + n_s &= -\, {\rm ad}^*_{u }m - {\rm ad}^*_{v }n = - (u m)_x -
m u_x - (v n)_x -  nv_x \,,
\\ \bigskip
v_t - u _s &= -\,{\rm ad}_v u = -u v_x + v u_x \,.
\end{split}
\label{Gstrand-eqn2R}
\end{align}
The Hamiltonian structure for these  ${\rm
Diff}(\mathbb{R})$-strand equations is obtained by Legendre
transforming to
\[
h(m,v)=\langle m,\, u\rangle - \ell(u,\,v) \,.\]

One may then write the equations (\ref{Gstrand-eqn2R}) in
Lie-Poisson Hamiltonian form as

\begin{equation}
\frac{d}{dt}
\begin{bmatrix}
m \\ v
\end{bmatrix}
=
\begin{bmatrix}
-\,{\rm ad}^*(\,\cdot\,) m  &\quad \partial_s + {\rm ad}^*_v
\\
\partial_s - {\rm ad}_v  &\quad 0
\end{bmatrix}
\begin{bmatrix}
{\delta h}/{\delta m} = u
\\
{\delta h}/{\delta v} = -\, n
\end{bmatrix}.
\label{1stHamForm}
\end{equation}

\section{Peakon solutions of the ${\rm Diff}(\mathbb{R})$-strand  equations}

With the following choice of Lagrangian,
 \begin{equation}
\ell (u ,v ) = \frac12 \|u \|^2_{H^1} - \frac12 \|v \|^2_{H^1} \,,
\label{Gstrand-pkn-Lag}
\end{equation}
the corresponding Hamiltonian is positive-definite and the ${\rm
Diff}(\mathbb{R})$-strand equations (\ref{Gstrand-eqn2R}) admit
peakon solutions in \emph{both} momenta
$$m=u-u_{xx} \quad \text{ and} \quad n=-(v-v_{xx}),$$ with
continuous velocities $u$ and $v$. This is a two-component
generalization of the CH equation.

\begin{theorem}
The ${\rm Diff}(\mathbb{R})$-strand equations (\ref{Gstrand-eqn2R})
admit singular solutions expressible as linear superpositions summed over $a\in\mathbb{Z}$
\begin{align}
\begin{split}
m(s,t,x) &= \sum_a M_a(s,t)\delta(x-Q^a(s,t)) \,,\\
n(s,t,x) &= \sum_a N_a(s,t)\delta(x-Q^a(s,t)) \,,
\\
u(s,t,x) & =K*m=\sum_a M_a(s,t) K(x,Q^a) \,,\\
v(s,t,x) & = -K*n=-\sum_a N_a(s,t) K(x,Q^a) \,,
\end{split}
\label{Gstrand-singsolns1}
\end{align}
that are \emph{peakons} in the case that $K(x,y)= \frac12
e^{-|x-y|}$ is the Green function the inverse Helmholtz operator
$(1-\partial_x^2)$: $$(1-\partial_x^2)K(x,0)=\delta(x) $$

\end{theorem}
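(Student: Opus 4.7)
The plan is to substitute the singular ansatz (\ref{Gstrand-singsolns1}) directly into the ${\rm Diff}(\mathbb{R})$-strand system (\ref{Gstrand-eqn2R}) and to check that both equations reduce to a coupled but self-consistent 1+1 PDE system for the finitely parametrised quantities $(Q^a, M_a, N_a)$ indexed by $a\in\mathbb{Z}$. I would proceed in three concrete steps, with a conceptual remark at the end that clarifies why the calculation succeeds.

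First, I observe that under the ansatz, $u=K*m=\sum_b M_b K(\,\cdot\,,Q^b)$ and $v=-K*n=-\sum_b N_b K(\,\cdot\,,Q^b)$ are \emph{continuous} functions of $x$ (the peakon profiles), with jump discontinuities in $u_x,v_x$ occurring only at the points $Q^a$. Thus $u(Q^a,s,t)$ and $v(Q^a,s,t)$ are unambiguous, and by using the symmetric (principal value) derivative of the Green kernel $K(x,y)=\frac{1}{2}e^{-|x-y|}$ one has $K_x(Q^a,Q^a)=0$, so that $u_x(Q^a)$ and $v_x(Q^a)$ are well-defined with contributions only from the other peaks.

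Second, I substitute the ansatz into the first equation $m_t+n_s=-(um)_x-mu_x-(vn)_x-nv_x$. Both sides are distributions supported on $\{Q^a(s,t)\}$ and decompose into parts proportional to $\delta(x-Q^a)$ and $\delta'(x-Q^a)$. Using the identity $(f\,\delta(x-Q))_x=f(Q)\,\delta'(x-Q)$ for $f$ continuous at $Q$, matching the $\delta'$-coefficients gives
\begin{equation*}
M_a\,\partial_t Q^a + N_a\,\partial_s Q^a = M_a\,u(Q^a,s,t) + N_a\,v(Q^a,s,t),
\end{equation*}
while matching the $\delta$-coefficients gives the momentum equation
\begin{equation*}
\partial_t M_a + \partial_s N_a = -\,M_a\,u_x(Q^a,s,t) - N_a\,v_x(Q^a,s,t).
\end{equation*}

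Third, I substitute into the zero-curvature equation $v_t-u_s=-uv_x+vu_x$. Since both sides are continuous in $x$ and expand in the linearly independent family $\{K(x,Q^a),K_y(x,Q^a)\}_a$, matching the $K_y$-coefficients uniquely pins down the position equations
\begin{equation*}
\partial_t Q^a = u(Q^a,s,t), \qquad \partial_s Q^a = v(Q^a,s,t),
\end{equation*}
which automatically satisfy the $\delta'$-relation from step two, while matching the $K$-coefficients produces a companion relation compatible with the $\delta$-relation. All the PDE constraints collapse to the position equations together with the momentum equation above, a well-posed reduced 1+1 system on the index set $a\in\mathbb{Z}$.

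The main obstacle is the distributional calculus at the peaks, where a product such as $m\,u_x$ is formally a delta supported exactly where $u_x$ jumps. This is the familiar subtlety of Camassa-Holm peakon theory, resolved by the symmetric-derivative convention above. A cleaner, coordinate-free route that sidesteps all such ambiguity is to recognise the map
\begin{equation*}
(Q,M,N)\longmapsto\left(\sum_a M_a\,\delta(x-Q^a),\;\sum_a N_a\,\delta(x-Q^a)\right)
\end{equation*}
as an equivariant Holm-Marsden momentum map from the cotangent bundle of $\mathrm{Emb}(\mathbb{Z},\mathbb{R})$ (carrying two sets of momenta $M$ and $N$) into $\mathfrak{g}^*\times\mathfrak{g}^*$. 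Being a momentum map, it is Poisson with respect to the Lie-Poisson bracket (\ref{1stHamForm}), so the singular ansatz is automatically preserved by the Hamiltonian flow and produces exactly the same reduced equations for $(Q^a, M_a, N_a)$ obtained by the direct computation.
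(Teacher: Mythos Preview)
Your direct-substitution strategy is the right one, and it is essentially what the paper does: the paper states the theorem and then simply records the reduced parameter system (\ref{Gstrand-eqns}) without a displayed proof. Your Steps~1--2 are correct; matching the $\delta'$- and $\delta$-coefficients in the momentum equation gives exactly the $\partial_tM_a$ relation of (\ref{Gstrand-eqns}).

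The gap is in Step~3. The nonlinear right-hand side $-uv_x+vu_x$ of the zero-curvature equation is a sum of products $K(x,Q^c)K_x(x,Q^b)$, and for the peakon kernel these do \emph{not} lie in $\mathrm{span}\{K(x,Q^a),\,K_y(x,Q^a)\}_a$. (For two peaks $Q^1<Q^2$, a short computation shows that on the interval $Q^1<x<Q^2$ the right side equals the nonzero constant $\tfrac12(M_1N_2-N_1M_2)e^{Q^1-Q^2}$, while the left side is a combination of $e^{\pm x}$.) Hence your ``match the $K_y$-coefficients'' argument does not go through, and the auxiliary equation is not satisfied pointwise in $x$ by the ansatz for two or more peaks. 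This is also why the paper's $\partial_tN_a$ equation carries the factor $(K^{-1})_{ae}$: the position relations $\partial_tQ^a=u(Q^a)$ and $\partial_sQ^a=v(Q^a)$ are \emph{imposed} as the characteristic equations transporting the delta support (they also make your $\delta'$-relation from Step~2 hold identically), and the $N_a$-evolution is then obtained either from the cross-derivative compatibility $\partial_t\partial_sQ^a=\partial_s\partial_tQ^a$ or, equivalently, by evaluating the zero-curvature relation at the peak positions $x=Q^e$ and inverting the matrix $K^{ea}$. Your closing momentum-map remark is precisely the clean structural reason this closure is consistent, and it is the argument you should promote to replace the faulty coefficient-matching in Step~3.
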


The solution parameters $\{Q^a(s,t), M_a(s,t), N_a(s,t)\}$ with
$a\in\mathbb{Z}$ that specify the singular solutions
(\ref{Gstrand-singsolns1}) are determined by the following set of
evolutionary PDEs in $s$ and $t$, in which we denote $
K^{ab}:=K(Q^a,Q^b) $ with integer summation indices
$a,b,c,e\in\mathbb{Z}$:
\begin{align}
\begin{split}
\partial_t Q^a(s,t) &= u(Q^a,s,t) = \sum_b M_b(s,t) K^{ab}
\,,\\
\partial_s Q^a(s,t) &= v(Q^a,s,t) =-\sum_b N_b(s,t) K^{ab}
\,,\\
\partial_t M_a(s,t) &= -\, \partial_s N_a
-\sum_c (M_aM_c-N_aN_c) \frac{\partial K^{ac}}{\partial Q^a}
\quad\hbox{(no sum on $a$),}
\\
\partial_t N_a(s,t) &=-\partial_s M_a
+    \sum_{b,c,e} (N_bM_c - M_bN_c) \frac{\partial K^{ec}}{\partial Q^e} (K^{eb}-K^{cb})(K^{-1})_{ae}
\,.
\end{split}
\label{Gstrand-eqns}
\end{align}

The last pair of equations in (\ref{Gstrand-eqns}) may be solved as a system for the momenta, i.e., Lagrange multipliers $(M_a,N_a)$, then used in the previous pair to update the support set of positions $Q^a(t,s)$.

\section{Single-peakon solution of the of the ${\rm Diff}(\mathbb{R})$-strand system} The single-peakon solution of the ${\rm Diff}(\mathbb{R})$-strand equations (\ref{Gstrand-eqn2R}) is straightforward to obtain from (\ref{Gstrand-eqns}). Combining the equations in (\ref{Gstrand-eqns}) for a single peakon shows that $Q^1(s,t)$ satisfies the Laplace equation,
\begin{equation*}
(\partial_s^2 - \partial _t^2)Q^1(s,t)=0\,.
 \label{Q-1-peak}
\end{equation*}
Thus, any function $h(s,t)$ that solves the wave equation provides
a solution $Q^1=h(s,t)$. From the first two equations in
(\ref{Gstrand-eqns})
\begin{equation*}
 M_1(s,t)=\frac{1}{K_0}h_t (s,t)
 \qquad
 N_1(s,t)=\frac{1}{K_0}h_s (s,t),
 \label{MN-1-peak}
\end{equation*}
where $K_0=K(0,0)$.

The solutions for the  single-peakon parameters $Q^1, M_1$ and
$N_1$ depend only on one function $h(s,t)$, which in turn depends
on the $(s,t)$ boundary conditions. The shape of the Green's
function comes into the corresponding solutions for the peakon
profiles
\[
u(s,t,x)  = M_1(s,t) K(x,Q^1(s,t)) \,,\qquad v(s,t,x)  =- N_1(s,t)
K(x,Q^1(s,t)) \,.
\]

\section{Peakon-Antipeakon collisions on a ${\rm Diff}(\mathbb{R})$-strand}

Denote the relative spacing $X(s,t)=Q^1-Q^2$ for the peakons at
positions $Q^1(t,s)$ and $Q^2(t,s)$ on the real line and the
Green's function $K=K(X)$.  Then the first two equations in
(\ref{Gstrand-eqns}) imply
\begin{align}
\begin{split}
\partial_t X &= (M_1-M_2)(K_0-K(X))
\,,\\
\partial_s X &= - (N_1-N_2)(K_0-K(X))
\,,\end{split} \label{Qdiff-eqns}
\end{align}
where $K_0=K(0)$.

The second pair of equations in (\ref{Gstrand-eqns}) may then be
written as
\begin{align}
\begin{split}
\partial_t M_1 &= - \partial_s N_1 - (M_1M_2- N_1N_2)K'(X)
\,,\\
\partial_t M_2 &= - \partial_s N_2 + (M_1M_2- N_1N_2)K'(X)
\,,\\
\partial_t N_1 &= - \partial_s M_1 + (N_1M_2-M_1N_2)
\frac{K_0-K}{K_0+K}K'(X)
\,,\\
\partial_t N_2 &= - \partial_s M_2 + (N_1M_2-M_1N_2)
\frac{K_0-K}{K_0+K} K'(X) \,.
\end{split}
\label{Gstrand-pp}
\end{align}
Asymptotically, when the peakons are far apart, the system
(\ref{Gstrand-pp}) simplifies, since $\frac{K_0-K}{K_0+K}\to1$ and
$K'(X)\to0$ as $|X|\to\infty$.

The system (\ref{Gstrand-pp}) has two immediate conservation laws
obtained from their sums and differences,
\begin{align}
\begin{split}
\partial_t (M_1+M_2) &= -\, \partial_s (N_1+N_2)
\,,\\
\partial_t (N_1-N_2) &=-\partial_s (M_1-M_2)
\,.\end{split} \label{pp-CLs}
\end{align}
These may be resolved by setting
\begin{align}
\begin{split}
M_1-M_2 &= \frac{\partial_t X}{K_0-K} \,,\qquad N_1-N_2 = -
\frac{\partial_s X}{K_0-K}
\,,\\
M_1+M_2 &= \partial_s\phi \,,\qquad N_1+N_2 = -\,\partial_t\phi
\,,
\end{split}
\label{pp-Xpotentials}
\end{align}
and introducing two potential functions, $X$ and $\phi$, for which
equality of cross derivatives will now produce the system of
equations (\ref{Qdiff-eqns}) and (\ref{Gstrand-pp}).

\section{A simplification.} A simplification arises if $\phi=0$, in
which case the collision is perfectly antisymmetric, as seen from
equation (\ref{pp-Xpotentials}). This is the peakon-antipeakon
collision, for which the  equation for $X$ reduces to
\begin{align}
(\partial_t^2 - \partial _s^2) X  &+ \frac{K'}{2(K_0-K)} (X_t^2-
X_s^2) = 0 \,.
\end{align}
This equation can be easily rearranged to produce a linear
equation:
\begin{align}
(\partial_t^2 - \partial _s^2) F(X)  = 0 \,,\quad\hbox{where}\quad
F(X) = \int_{X_0}^X (K_0-K(Y))^{-1/2}\,dY \,.
\end{align}
When $K(Y)=\frac{1}{2}e^{-|Y|}$, we have
\begin{align}
F(X) = \sqrt{2}\int_{X_0}^X \frac{1}{\sqrt{1-e^{-|Y|}}}\,dY
.\label{F-express}
\end{align}

We can take for simplicity $X_0=0$, this would change $F(X)$ only
by a constant. The computation gives $$F(X)=
2\sqrt{2}\,\text{sign}(X)\cosh^{-1}\left(e^{|X|/2}\right) $$.
Hence the solution $X(t,s)$ can be expressed in terms of any
solution $h(t,s)$ of the linear wave equation $(\partial_t^2 -
\partial _s^2)h(t,s)=0$ as
\begin{align}
X(t,s) = \pm \ln \left({\,\rm cosh}^2(h(t,s))\right) \,.
\label{X-express}
\end{align}
$h(t,s)$ is any solution of the wave equation. $$M_1=-M_2 =
\frac{\partial_t X}{2(K_0-K(X))} \,,\qquad N_1=-N_2 = -
\frac{\partial_s X}{2(K_0-K(X))}.$$

\section*{Complex ${\rm Diff}(\mathbb{R})$-strand  equations }

The ${\rm Diff}(\mathbb{R})$-strands may also be
\emph{complexified}. Upon complexifying $(s,t)\in
\mathbb{R}^2\to(z,\bar{z})\in \mathbb{C}$ where $\bar{z}$ denotes
the complex conjugate of $z$ and setting $\partial_z{g}= u \circ
g$ and $\partial_{\bar{z}}{g}= \bar{u} \circ g$ the
Euler-Poincar\'e $G$-strand equations in (\ref{Gstrand-eqn2R})
become
\begin{align}
\begin{split}
\frac{\partial}{\partial z} \frac{\delta\ell}{\delta u} +
\frac{\partial}{\partial \bar{z} }  \frac{\delta\ell}{\delta {
\bar{u}} } &= -\, {\rm ad}^*_{u}\frac{\delta\ell}{\delta u} - {\rm
ad}^*_{{\bar{u}} }\frac{\delta\ell}{\delta { \bar{u}} } \,,
\\
\frac{\partial {\bar{u}}}{\partial z}   - \frac{\partial
u}{\partial \bar{z}} &=  {\rm ad}_u { \bar{u}} \,.
\end{split}
\label{ComplexGstrand-eqns1}
\end{align}
Here the Lagrangian $\ell$ is taken to be real: \begin{align} \ell
(u, {\bar{u}}) = \frac12\|\nu\|_{H^1}^2 = \frac12\int u\,
(1-\partial_x^2) \, {{ \bar{u}}} \,dx . \, \label{ComplexNorm-ell}
\end{align}


Upon setting  $m={\delta\ell}/{\delta u}$,
$\bar{m}={\delta\ell}/{\delta {\bar{u}} }$, for the real
Lagrangian $\ell$, equations (\ref{ComplexGstrand-eqns1}) may be
rewritten as

\begin{align}
\begin{split}
m_{z} +  \bar{m}_{\bar{z}} &= -\, {\rm ad}^*_{u }m - {\rm
ad}^*_{{\bar{u}} }\,\bar{m} = - (u m)_x - mu_x - ({\bar{u}}
\,\bar{m})_x - \bar{m}\,{\bar{u}}_x \,,
\\  \\
{\bar{u}}_z- u _{\bar{z}} &= -\,{\rm ad}_{{\bar{u}}} \,u = -u \,{
\bar{u}}_x + { \bar{u}}\,u_x \,,
\end{split}
\label{ComplexGstrand-eqns2}
\end{align}\bigskip

where the independent coordinate $x\in\mathbb{R}$ is on the real
line, although coordinates $ (z,\bar{z})\in\mathbb{C}$ are
complex, as are solutions $u$, and $m=u-u_{xx}$. This is a
possible comlexification of the Camassa-Holm equation. These
equations are invariant under two involutions, $P$ and $C$, where
\[
P: (x,m)\to(-x,-m)
\quad \hbox{and} \quad
C: \hbox{Complex conjugation.}
\]
They admit singular solutions just as before, modulo
$\mathbb{R}\times\mathbb{R}\to\mathbb{C}$. For real variables
$m=\bar{m}$, $u={\bar{u}}$ and real evolution parameter
$z=\bar{z}=:t$, they reduce to the CH equation. Their travelling
wave solutions and other possible CH complexifications are studied
in \cite{Ho-Iv1}.

\section*{Conclusions}\label{conclusion-sec}

The $G$-strand equations comprise a system of PDEs obtained from
the Euler-Poincar\'e (EP) variational equations for a
$G$-invariant Lagrangian, coupled to an auxiliary
\emph{zero-curvature} equation. Once the $G$-invariant Lagrangian
has been specified, the  system of $G$-strand equations in
(\ref{MatrAlgEq}) follows automatically in the EP framework. For
matrix Lie groups, some of the the $G$-strand systems are
integrable. The single-peakon and the peakon-antipeakon solution
of the ${\rm Diff}(\mathbb{R})$-strand equations
(\ref{Gstrand-eqn2R}) depends on a single function of $s,t$. The
\emph{complex} ${\rm Diff}(\mathbb{R})$-strand equations and their
peakon collision solutions have also been solved by elementary
means. The stability of the single-peakon solution under
perturbations into the full solution space of equations
(\ref{Gstrand-eqn2R}) would be an interesting problem for future
work.

\section{Acknowledgments}
We are grateful for enlightening discussions of this material with F. Gay-Balmaz, T. S. Ratiu and C. Tronci. Work by RII was partially supported by the Science Foundation Ireland (SFI), under Grant No. 09/RFP/MTH2144. Work by DDH was partially supported by Advanced Grant 267382 FCCA from the European Research Council.

\section*{References}

\end{document}